\documentclass{ifacconf}
\usepackage{tikz}
\usepackage{graphicx,color} 
\usepackage{natbib} 
\usepackage{amsmath}
\usepackage{amssymb}
\usepackage{mathtools}

\newcommand{\CC}{\mathbb{C}}

\newcommand{\RR}{\mathbb{R}}

\def\A{\mathcal{A}}  
\def\B{\mathcal{B}}  
\def\C{\mathcal{C}} 
\def\D{\mathcal{D}} 
 
\def\G{\mathcal{G}}  
\def\L{\mathcal{L}} 
\def\K{\mathcal{K}}

\def\S{\mathcal{S}}

\def\R{\mathcal{R}}

\DeclareMathOperator{\rank}{rank}

\newtheorem{definition}{Definition}
\newtheorem{example}{Example}
\newtheorem{lemma}{Lemma}
\newtheorem{proposition}{Proposition}

\usepackage{float}

\usepackage{enumitem}
\setlist[itemize]{leftmargin=*}
\setlist[enumerate]{leftmargin=*}
\usepackage{comment}
\usepackage{array}

\begin{document}
\begin{frontmatter}

\title{Observer-Based Stabilization for\\Linear Multi-Agent Dynamical Systems\\
Using Generalized Frequency Variables}

\author[First]{G. Q. Bao Tran}
\author[Second]{Yutaka Hori}
\author[Third]{Shinji Hara}

\address[First]{Coordinated Science Laboratory, University of Illinois Urbana-Champaign, Urbana, IL 61801, USA\\(e-mail: baotran@illinois.edu).}
\address[Second]{Applied Physics and Physico-Informatics, Keio University, 3-14-1 Hiyoshi, Kohoku-ku, Yokohama, Kanagawa 223-8522, Japan\\(e-mail: yhori@appi.keio.ac.jp).}
\address[Third]{Supercomputing Research Center, Institute of Integrated Research, Institute of Science Tokyo, 2-12-1 Ookayama, Meguro-ku, Tokyo, Japan (e-mail: shinji\_hara@ipc.i.u-tokyo.ac.jp).}

\begin{abstract} 
We address the conditions and design of controllers and observers for homogeneous networks of linear MIMO agents. We develop networked controllers and observers that ensure the stability of both the system state and the estimation error, leveraging the concept of generalized frequency variables. A separation principle for networks is then established, showing that the observer and controller can be designed independently and combined to achieve a stable output feedback. Our results are illustrated via a highly unstable, oscillatory network of locally actuated pendulums on carts. Finally, necessary conditions for controllability and observability—derived from agent properties and network structure—are established and discussed.
\end{abstract}

\begin{keyword}
Multi-agent systems, linear systems, control, estimation, separation principle.
\end{keyword}
\end{frontmatter}

\section{Introduction}
Multi-agent systems---those viewed as an interconnection of (a large number of) sub-systems or agents---are of particular research interest thanks to their numerous applications, e.g., in drones~\citep{souli}. To cope with the challenges arising in such large-scale networks, many studies have been conducted, e.g., on decentralized control~\citep{wangDecentralized,bakule}. One approach that can provide a unifying theoretical paradigm for a general class of multi-agent systems—where agents operate autonomously while exchanging information with one another—is the notion of \emph{generalized frequency variables} introduced in~\citep{haraCDC07}. In this viewpoint, a network of identical linear single-input--single-output (SISO) agents with transfer function $h(s)$ can be described by the transfer function $\G(s) := G(\phi(s))$, where $G(s)$ is a proper rational function representing only the interactions between agents, or the network structure, and $\phi(s) := 1/h(s)$. Within this framework, $\G(s)$ is obtained by simply replacing the traditional frequency $s$ in $G(s)$ with $\phi(s)$, motivating us to call $\phi(s)$ a generalized frequency variable. This paradigm, extended to the case of multi-input--multi-output (MIMO) agents in~\citep{haraRobust}, enables a separation of agent and network properties in the analysis and reduces the computational burden.

In this work, we study conditions for, and the design of, observers and controllers for homogeneous networks of linear time-invariant (LTI) MIMO agents using the language of generalized frequency variables. The proposed results are illustrated through a pendulum network example. Controllability and observability conditions for networks studied in~\citep{haraSICE,trumpf} are revisited for improved checkability.

\emph{Notations:} Denote by $\RR$ (resp., $\CC$) the set of real (resp., complex) numbers and by $\RR^{m \times n}$ (resp., $\CC^{m \times n}$) the set of real-valued (resp., complex-valued) $(m\times n)$ matrices. Let $\R_p$ be the set of real rational functions. Let $I_n \in \RR^{n \times n}$ be the identity matrix. Let $A \otimes B$ be the Kronecker product of matrices $A$ and $B$. Denote $\sigma(A)$ as the set of eigenvalues of the square matrix $A$. A polynomial is Hurwitz if all its roots are in the open left half complex plane.

\section{System Representation}
Consider a homogeneous network of $N$ identical $n$-dimensional LTI MIMO agents of the form
\begin{equation}\label{eq:sysi}
\dot{x}_i = A_h x_i + B_h u_i, \qquad y_i = C_h x_i,
\end{equation}
i.e., described by the matrices $(A_h,B_h,C_h) \in \RR^{n \times n} \times \RR^{n \times m} \times \RR^{m \times n}$, with interconnection structure characterized by the matrices $(A,B,C) \in \RR^{N \times N} \times \RR^{N \times M} \times \RR^{M \times N}$. From~\eqref{eq:sysi}, the matrix transfer function of each agent is
\begin{equation}\label{eq:Hs}
H(s) = C_h (s I_n - A_h)^{-1} B_h \in \R_p^{m\times m},
\end{equation}
where $s \in \CC$ denotes the frequency variable. Let $(D_h(s))^{-1} N_h(s)$ be a left coprime factorization of $H(s)$. When $m=1$ (SISO agent case), we can write
\begin{equation}\label{eq:hs}
h(s) = n(s)/d(s) \in \R_p.
\end{equation}
The matrix $A$ describes the interconnection between agents, while $B$ (resp., $C$) indicates which agents are actuated (resp., measured), namely, there is an input $u \in \RR^{Mm}$ used to control the network and an output $y \in \RR^{Mm}$ taken from this network. In this paper, we aim to design:
\begin{itemize}[leftmargin=*,nosep]
\item a state observer providing estimates $\hat{x}_i$ of the agent states using the known network input $u$ and output $y$;
\item an observer-based state-feedback controller generating $u$ that stabilizes the network using the state estimates.
\end{itemize}
We define the following matrices:
\begin{subequations}\label{eq:calABC}
\begin{align}
\A &:= I_N \otimes A_h + A \otimes (B_h C_h) \in \RR^{Nn \times Nn},\\
\B &:= B \otimes B_h \in \RR^{Nn \times Mm},\\
\C &:= C \otimes C_h \in \RR^{Mm \times Nn}.
\end{align}
\end{subequations}
Denoting $x = (x_1,x_2,\ldots,x_N) \in \RR^{Nn}$, we obtain the multi-agent system representation~\citep{haraCDC07}
\begin{equation}\label{eq:sysx}
\dot{x} = \A x + \B u,\qquad
y = \C x.
\end{equation}
Next, we study controller$\slash$observer design for system~\eqref{eq:sysx}.

\section{Observer-Based Controller Design and Separation Principle}
\subsection{Problem Formulation}
A state-feedback controller for system~\eqref{eq:sysx} takes the form
\begin{equation}\label{eq:controller}
    u = -\K x \in \RR^{M m},
\end{equation}
where $\K \in \RR^{Mm \times Nn}$ is the controller gain to design. Next, an observer for system~\eqref{eq:sysx} has the form
\begin{equation}\label{eq:observer}
\dot{\hat{x}} = \A\hat{x} + \B u + \L(y - \hat{y}), \qquad \hat{y} = \C\hat{x}, 
\end{equation}
where $\hat{x} = (\hat{x}_1,\hat{x}_2,\ldots,\hat{x}_N)\in \RR^{Nn}$ in which each $\hat{x}_i$ is the estimate of $x_i$ for $i = 1,2,\ldots,N$, and $\L \in \RR^{Nn \times Mm}$ is the observer gain to find. The observer compares the network's output with its estimate for correction.
\begin{definition}
	System~\eqref{eq:sysx} is said to be stabilizable (resp., detectable) if there exists $\K$ (resp., $\L$) such that $\A - \B\K$ (resp., $\A - \L\C$) is Hurwitz. 
\end{definition}
The following result is standard in linear systems like~\eqref{eq:sysx}.
\begin{lemma}
	There exists an exponentially stabilizing controller~\eqref{eq:controller} (resp., observer~\eqref{eq:observer}) for system~\eqref{eq:sysx} if and only if this system is stabilizable (resp., detectable).
\end{lemma}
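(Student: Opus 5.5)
The argument is essentially definitional: the statement reduces to the fact that a linear time-invariant system is exponentially stable if and only if its system matrix is Hurwitz. I will make explicit what ``exponentially stabilizing controller'' and ``exponentially stable observer'' mean, and then treat the two parts in parallel.

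\emph{Controller part.} Substituting $u = -\K x$ into~\eqref{eq:sysx} gives the closed loop $\dot{x} = (\A - \B\K)x$, whose solutions are $x(t) = e^{(\A-\B\K)t}x(0)$.

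If~\eqref{eq:sysx} is stabilizable, pick $\K$ with $\A-\B\K$ Hurwitz and let $\alpha>0$ be smaller than the smallest distance from $\sigma(\A-\B\K)$ to the imaginary axis. Bounding $\|e^{(\A-\B\K)t}\|$ through the Jordan form, with the polynomial factors absorbed by the margin in $\alpha$, gives $\|x(t)\| \le c\, e^{-\alpha t}\|x(0)\|$ for some $c\ge 1$. This is exponential stability.

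Conversely, suppose some $\K$ makes the closed loop exponentially stable but $\A-\B\K$ has an eigenvalue $\lambda$ with $\mathrm{Re}\,\lambda \ge 0$ and eigenvector $v$. Initializing at $\mathrm{Re}\,v$ or $\mathrm{Im}\,v$ produces a real solution that does not decay, because $e^{(\A-\B\K)t}v = e^{\lambda t}v$ and $|e^{\lambda t}|\ge 1$. This contradicts exponential stability, so $\A-\B\K$ is Hurwitz and the system is stabilizable by definition.

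\emph{Observer part.} Define the estimation error $e := x - \hat{x}$. Subtracting~\eqref{eq:observer} from~\eqref{eq:sysx} and using $y - \hat y = \C e$ gives $\dot e = (\A - \L\C)e$. The input term $\B u$ cancels, so the error dynamics are independent of $u$. The observer is exponentially convergent exactly when $e(t)\to 0$ exponentially for every initial error, and the same two-sided argument applied to $\A-\L\C$ yields the equivalence with detectability.

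The only point needing care is the converse direction: a complex eigenvector has to be turned into a real initial condition with a non-decaying trajectory. Taking real and imaginary parts handles this, since at least one of them is nonzero and the two cannot both produce decaying trajectories. Everything else is standard linear-systems theory, which the paper is entitled to cite.
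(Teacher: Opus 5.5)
Your proof is correct. The paper gives no argument of its own and simply calls the lemma standard, and your reasoning is exactly the standard fact being invoked: the closed loop $\dot{x}=(\A-\B\K)x$, the input-independent error dynamics $\dot{e}=(\A-\L\C)e$ (the same computation the paper later carries out in~\eqref{eq:syse}), and the equivalence between exponential stability of an LTI system and the Hurwitz property, with the real/imaginary-part argument handling the converse.
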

However, finding the gains $\K$ and $\L$ is computationally heavy in large networks. In the next part, we propose a constructive method for designing such gains.

\subsection{Designs Based on Generalized Frequency Variables}
To design the gains $\K$ and $\L$ using generalized frequency variables, we rely on the following preliminary results.
\begin{lemma}[\cite{haraRobust}]\label{lemA}
Consider an autonomous network with structure matrix $A$, where each agent is described by~\eqref{eq:sysi}. Define $p(\lambda,s) := \det(sI_n - A_h - \lambda B_h C_h)$. This system is exponentially stable, i.e., $\A$ is Hurwitz, if and only if $\sigma(A) \subset \Lambda_s := \{\lambda \in \CC: p(\lambda,s) \text{ is Hurwitz}\}$.
\end{lemma}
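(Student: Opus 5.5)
The plan is to block-triangularize $\A$ by a similarity transformation acting only on the network factor, which reduces the spectrum of $\A$ to a union of agent-sized spectra, one for each eigenvalue of $A$. Let $A = T J T^{-1}$ be a Schur (or Jordan) decomposition of the structure matrix, with $T \in \CC^{N\times N}$ unitary (or merely invertible) and $J$ upper triangular. The diagonal entries of $J$ are the eigenvalues $\lambda_1,\ldots,\lambda_N$ of $A$, repeated with multiplicity. Applying the similarity $T\otimes I_n$ and using the mixed-product rule for Kronecker products, I would compute
\begin{equation*}
(T\otimes I_n)^{-1}\A(T\otimes I_n) = I_N\otimes A_h + J\otimes (B_hC_h).
\end{equation*}
This matrix is block upper triangular with $n\times n$ diagonal blocks $A_h + \lambda_k B_hC_h$, for $k=1,\ldots,N$. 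The off-diagonal blocks $J_{kl}B_hC_h$ with $k<l$ sit strictly above the block diagonal.

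From this structure the characteristic polynomial factors as
\begin{equation*}
\det(sI_{Nn}-\A) = \prod_{k=1}^N \det(sI_n - A_h - \lambda_k B_hC_h) = \prod_{k=1}^N p(\lambda_k,s).
\end{equation*}
Here I use that similarity preserves the determinant and that the determinant of a block triangular matrix is the product of the determinants of its diagonal blocks. Hence $\sigma(\A) = \bigcup_{\lambda\in\sigma(A)} \{s : p(\lambda,s)=0\}$.

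Both directions of the equivalence follow from this identity. The matrix $\A$ is Hurwitz iff every root in $s$ of every factor $p(\lambda_k,s)$ lies in the open left half-plane. This holds iff $p(\lambda_k,\cdot)$ is Hurwitz for each $k$, i.e.\ iff $\lambda_k\in\Lambda_s$ for all $k$, which is exactly $\sigma(A)\subset\Lambda_s$. The equivalence of exponential stability of the autonomous network $\dot x = \A x$ with $\A$ being Hurwitz is standard for LTI systems, so no further argument is needed there.

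The only point requiring care, and the main obstacle, is that $A$ need not be diagonalizable and its eigenvalues may be complex. For this reason I would work over $\CC$ with a triangularization rather than a diagonalization. Note that $p(\lambda,s)$ is defined for complex $\lambda$, so its Hurwitz property still makes sense, and the block-triangular determinant argument does not require diagonal $J$.
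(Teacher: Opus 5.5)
Your proof is correct and complete. The paper gives no proof of Lemma~\ref{lemA}; it imports the result from \cite{haraRobust}, so there is no in-text argument to compare against. Your Schur-triangularization route gives the exact factorization $\det(sI_{Nn}-\A)=\prod_{k=1}^N p(\lambda_k,s)$ and is the standard way this result is established.

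It is worth contrasting it with the Kronecker eigenvector-lifting trick the paper uses in the proof of Lemma~\ref{lem_AB}. Take $z^\top=v^\top\otimes\eta^\top$ with $v^\top A=\lambda v^\top$ and $\eta^\top(A_h+\lambda B_hC_h)=\mu\eta^\top$. This shows only that every root of $p(\lambda,\cdot)$, $\lambda\in\sigma(A)$, lies in $\sigma(\A)$, which yields only the ``only if'' direction. The ``if'' direction needs the reverse inclusion $\sigma(\A)\subseteq\bigcup_{\lambda\in\sigma(A)}\{s: p(\lambda,s)=0\}$. Your similarity to a block upper triangular matrix supplies exactly this, with no diagonalizability assumption on $A$ or on $A_h+\lambda B_hC_h$.
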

When $m = 1$ (SISO agent), we have $p(\lambda,s) = d(s) - \lambda n(s)$ ($d(s),n(s)$ from~\eqref{eq:hs}) and the same results hold~\citep[Theorem 1]{haraFreq}.
We propose these forms for the gains:
\begin{align}
	\K & = K\otimes C_h,\label{eq:calK}\\
	\L &= L \otimes B_h,\label{eq:calL}
\end{align}
where $K \in \RR^{M \times N}$ and $L \in \RR^{N \times M}$ are \emph{distributive} gains to find. The next lemma gives a systematic way to pick $L$.
\begin{proposition}\label{propo_obs}
System~\eqref{eq:sysx} is detectable, i.e., observer~\eqref{eq:observer} exists, if there exists $L$ such that $\sigma(A-LC) \subset \Lambda_s$.
\end{proposition}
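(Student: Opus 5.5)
With the gain \eqref{eq:calL}, the observer error matrix again has the structure of an autonomous network, so Lemma~\ref{lemA} applies directly. The plan is to take $\L = L\otimes B_h$, where $L$ is the matrix assumed in the statement. We then show that $\A - \L\C$ is Hurwitz, which by the definition of detectability, together with the preceding lemma, gives existence of observer~\eqref{eq:observer}.

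The first step is a Kronecker-product computation. By the mixed-product rule,
\begin{equation*}
\L\C = (L\otimes B_h)(C\otimes C_h) = (LC)\otimes(B_hC_h).
\end{equation*}
Substituting the definition of $\A$ from~\eqref{eq:calABC} gives
\begin{equation*}
\A - \L\C = I_N\otimes A_h + A\otimes(B_hC_h) - (LC)\otimes(B_hC_h) = I_N\otimes A_h + (A-LC)\otimes(B_hC_h).
\end{equation*}
This is exactly the system matrix of an autonomous network of agents~\eqref{eq:sysi} whose structure matrix is $\tilde A := A - LC \in \RR^{N\times N}$ instead of $A$.

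The second step applies Lemma~\ref{lemA} with $\tilde A$ in place of $A$. The set $\Lambda_s$ depends only on the agent data $(A_h,B_h,C_h)$ through $p(\lambda,s)$, not on the structure matrix. Hence the hypothesis $\sigma(A-LC)\subset\Lambda_s$ is precisely the condition of that lemma, and so $\A-\L\C$ is Hurwitz. The estimation error $e := x-\hat x$ satisfies $\dot e = (\A-\L\C)e$ and therefore converges to zero exponentially.

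The main point to check is that Lemma~\ref{lemA} holds for an arbitrary real structure matrix. It must not require symmetry or any special structure, since $A-LC$ is generally neither. The lemma as stated places no restriction on $A$: it rests on triangularizing $\tilde A$ (e.g., Schur form, $\tilde A = U T U^{*}$) and noting that $(U\otimes I_n)$ block-triangularizes the network matrix with diagonal blocks $A_h+\lambda_i B_hC_h$, $\lambda_i\in\sigma(\tilde A)$. So I expect this step to go through, and the proof reduces to the algebraic identity above.
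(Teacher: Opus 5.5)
Your proposal is correct and matches the paper's proof essentially step for step. The mixed-product identity reduces $\A-\L\C$ to $I_N\otimes A_h+(A-LC)\otimes(B_hC_h)$, which is an autonomous network with structure matrix $A-LC$, and Lemma~\ref{lemA} then shows it is Hurwitz. Your added Schur-triangularization remark correctly confirms that Lemma~\ref{lemA} needs no symmetry or other special structure on the interconnection matrix; the paper leaves this point implicit.
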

\begin{pf}
With $\L$ in~\eqref{eq:calL}, the error $\tilde{x} := x - \hat{x}$ verifies
\begin{align}
\dot{\tilde{x}} & = (\A - \L\C)\tilde{x}\notag\\ & = (I_N \otimes A_h + A \otimes (B_h C_h) - (L \otimes B_h)(C \otimes C_h))\tilde{x} \notag\\ & = (I_N \otimes A_h + A \otimes (B_h C_h) - (LC) \otimes (B_hC_h))\tilde{x}\notag\\ & = (I_N \otimes A_h + (A-LC) \otimes (B_h C_h))\tilde{x}. \label{eq:syse}
\end{align}
This system is equivalent to an autonomous network of the same agents but with interconnection matrix $A-LC$ instead of $A$. Using Lemma~\ref{lemA}, we get the result.\hfill $\blacksquare$
\end{pf}
Now, we are interested in stabilizing system~\eqref{eq:sysx}. Ideally, if the full state is available, we can use the controller~\eqref{eq:controller} with $\K$ of the form~\eqref{eq:calK}
and show that the closed-loop system is
\begin{equation}
\dot{x} = (I_N \otimes A_h + (A-BK) \otimes (B_h C_h))x.
\end{equation}
Using logic similar to the observer, we can pick $K$ such that $\sigma(A-BK) \subset \Lambda_s$, which will then stabilize $x$. The existence of such a $K$ in turn implies the stabilizability of system~\eqref{eq:sysx}. However, the number of agents is typically very large, thus $x$ is typically not fully available, and we measure only a partial $y$ from the network using sensors. We then use the observer-based feedback controller
\begin{equation}\label{eq:controller_hat}
    u = -\K \hat{x},
\end{equation}
with $\hat{x}$ coming from~\eqref{eq:observer}, thereby generalizing the well-known \emph{separation principle} to networks as follows.

\begin{proposition}\label{propo_sepa}
The concatenation~\eqref{eq:sysx}-\eqref{eq:observer} with $u$ given by~\eqref{eq:controller_hat} is exponentially stable if there exist $K$ and $L$ such that $\sigma(A-BK) \subset \Lambda_s$ and $\sigma(A-LC) \subset \Lambda_s$.
\end{proposition}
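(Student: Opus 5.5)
The plan is to mirror the classical separation principle argument, working in the coordinates $(x,\tilde{x})$ with $\tilde{x} := x - \hat{x}$ as in the proof of Proposition~\ref{propo_obs}. With $\K = K\otimes C_h$ from~\eqref{eq:calK} and $\L = L\otimes B_h$ from~\eqref{eq:calL}, substituting $u = -\K\hat{x} = -\K x + \K\tilde{x}$ into~\eqref{eq:sysx} gives $\dot{x} = (\A-\B\K)x + \B\K\tilde{x}$. Subtracting the observer~\eqref{eq:observer} from~\eqref{eq:sysx}, the input terms $\B u$ cancel regardless of $u$, so $\dot{\tilde{x}} = (\A-\L\C)\tilde{x}$ exactly as in~\eqref{eq:syse}. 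The closed loop in these coordinates is therefore block upper triangular,
\begin{equation*}
\begin{bmatrix}\dot{x}\\ \dot{\tilde{x}}\end{bmatrix} = \begin{bmatrix}\A-\B\K & \B\K\\ 0 & \A-\L\C\end{bmatrix}\begin{bmatrix}x\\ \tilde{x}\end{bmatrix}.
\end{equation*}
The map $(x,\hat{x})\mapsto(x,\tilde{x})$ is an invertible linear change of coordinates, so exponential stability of this system is equivalent to exponential stability of the concatenation~\eqref{eq:sysx}-\eqref{eq:observer} with~\eqref{eq:controller_hat}.

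Next I compute the diagonal blocks in network form. Using the mixed-product rule $(B\otimes B_h)(K\otimes C_h) = (BK)\otimes(B_hC_h)$, we get $\A-\B\K = I_N\otimes A_h + (A-BK)\otimes(B_hC_h)$. This is the system matrix of an autonomous network of the same agents with interconnection matrix $A-BK$. Similarly, $\A-\L\C = I_N\otimes A_h + (A-LC)\otimes(B_hC_h)$, as already shown in~\eqref{eq:syse}.

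Then I apply Lemma~\ref{lemA} twice. Since $\sigma(A-BK)\subset\Lambda_s$, the matrix $\A-\B\K$ is Hurwitz. Since $\sigma(A-LC)\subset\Lambda_s$, the matrix $\A-\L\C$ is Hurwitz. The spectrum of a block triangular matrix is the union of the spectra of its diagonal blocks, so the closed-loop matrix is Hurwitz and the closed loop is exponentially stable.

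No step is a real obstacle. The only points needing care are that the $\B u$ terms cancel in the error dynamics and that the off-diagonal coupling $\B\K$ does not affect the spectrum. Applying Lemma~\ref{lemA} with $A$ replaced by $A-BK$ or $A-LC$ is legitimate because the lemma holds for an arbitrary interconnection matrix.
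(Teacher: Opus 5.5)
Your proposal is correct and follows the paper's proof essentially step for step. You pass to the coordinates $(x,\tilde{x})$, obtain the block upper triangular closed-loop matrix with diagonal blocks $I_N\otimes A_h+(A-BK)\otimes(B_hC_h)$ and $I_N\otimes A_h+(A-LC)\otimes(B_hC_h)$ via the mixed-product rule, and apply Lemma~\ref{lemA} to each block; your remark that $(x,\hat{x})\mapsto(x,\tilde{x})$ is an invertible change of coordinates makes explicit a point the paper leaves implicit.
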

\begin{pf}
Feeding~\eqref{eq:sysx} with $u$ from~\eqref{eq:controller_hat} with $\K$ in~\eqref{eq:calK}, we get
\begin{align}
\dot{x} & = \A x -\B\K\hat{x} = (\A -\B\K)x + \B\K \tilde{x} \notag\\& = (I_N \otimes A_h + A \otimes (B_h C_h) - (B \otimes B_h) (K\otimes C_h))x\notag \\& \qquad{}+(B \otimes B_h) (K\otimes C_h) \tilde{x}\notag \\&
= (I_N \otimes A_h + A \otimes (B_h C_h)-(BK) \otimes (B_hC_h))x \notag\\
&\qquad{}+ ((BK) \otimes (B_hC_h)) \tilde{x}\notag\\
&
= (I_N \otimes A_h + (A-BK) \otimes (B_h C_h))x\notag\\
&\qquad{} + ((BK) \otimes (B_hC_h)) \tilde{x},
\end{align}
where $\tilde{x} = x - \hat{x}$ is the estimation error. Since the same $u$ is fed to observer~\eqref{eq:observer}, the dynamics of $\tilde{x}$ are still the same as~\eqref{eq:syse}. The concatenated autonomous $(x,\tilde{x})$ dynamics are then~\eqref{eq:sepa}.
\begin{figure*}[t]
\begin{equation}\label{eq:sepa}
\begin{pmatrix}
\dot{x} \\ \dot{\tilde{x}}
\end{pmatrix}
=
\begin{pmatrix}
I_N \otimes A_h + (A-BK) \otimes (B_h C_h) & (BK) \otimes (B_hC_h) \\
0 &  I_N \otimes A_h + (A-LC) \otimes (B_h C_h)
\end{pmatrix}
\begin{pmatrix}
x \\ \tilde{x}
\end{pmatrix}.
\end{equation}
\end{figure*}
Since the system matrix is block upper triangular, exponential stability holds if and only if both $I_N \otimes A_h + (A-BK) \otimes (B_h C_h)$ and $I_N \otimes A_h + (A-LC) \otimes (B_h C_h)$ are Hurwitz, independently of each other. Applying Lemma~\ref{lemA} block-wise yields the result.\hfill $\blacksquare$
\end{pf}
Note that Propositions~\ref{propo_obs} and~\ref{propo_sepa} are \emph{sufficient} conditions for stabilizability$\slash$detectability because $\K$ and $\L$ may still exist but not in the forms~\eqref{eq:calK} and~\eqref{eq:calL}. Moreover,  if $\Lambda_s$ is non-empty and $(A,B)$ is controllable (resp., $(A,C)$ is observable), then system~\eqref{eq:sysx} is stabilizable (resp., detectable). Finding the controller and observer gains this way can considerably reduce computation in large networks.

\subsection{Example: Inverted Pendulums on Carts}
To illustrate our method in stabilizing and observing the whole network by controlling and measuring from only a small number of agents, we study a network of locally controlled inverted pendulums on carts described in~\cite[Section V.B]{haraFreq}. Each SISO agent equipped with an appropriate PD controller has the transfer function
\begin{equation}\label{eq:hs_eg}
h(s) = \frac{(0.5s+1)(1.9 s^2 - 0.002 s + 2.1)}{s(s-2)(s+1)(s+5)}.
\end{equation}
It is shown in~\citep{haraFreq} that $h(s)$ has a narrow stability region $\Lambda_s$ that does not intersect with the real axis---see Fig.~\ref{fig:pend_diverge}-Left. We consider a network of $N=4$ such locally controlled agents. The four agents are assumed to communicate with each other through a gain $k$ in a cyclic network, i.e.,
$
A =
k\left(\begin{smallmatrix}
0 & 0 & 0 & 1\\
1 & 0 & 0 & 0\\
0 & 1 & 0 & 0\\
0 & 0 & 1 & 0
\end{smallmatrix}\right)
$.
With $k=10$, we see that the eigenvalues of $A$, which are on a circle of radius $k$ centered at the origin, are outside the stability region, and so the multi-agent system is unstable---see Fig.~\ref{fig:pend_diverge}-Right.
\begin{figure}[H]
	\begin{center}	
\includegraphics[width=0.494\columnwidth,height=0.25\columnwidth]{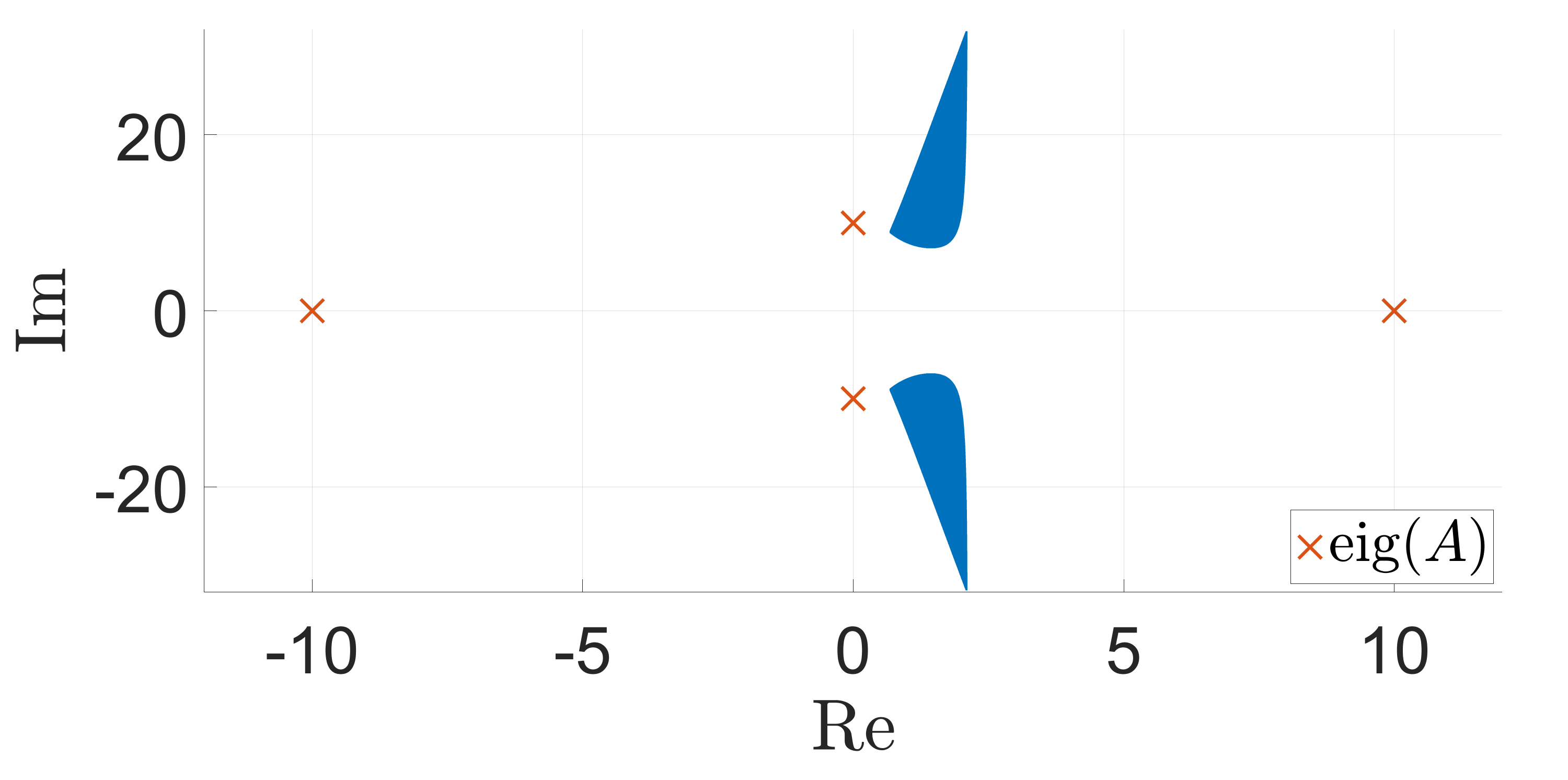} \includegraphics[width=0.494\columnwidth,height=0.25\columnwidth]{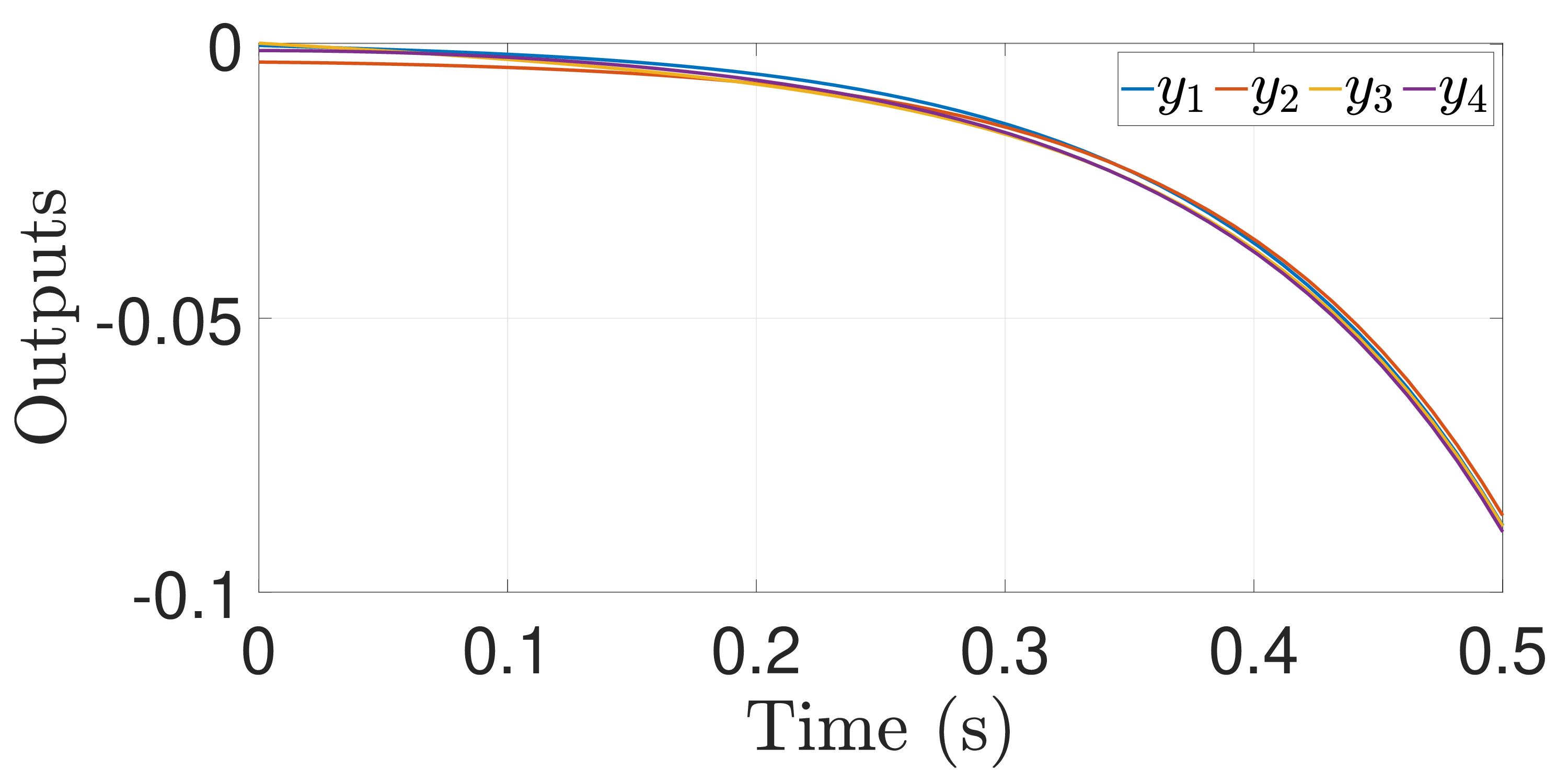} 
	\caption{Left: Stability region $\Lambda_s$ from $h(s)$ (blue) and $\sigma(A)$. Right: Unstable trajectories of uncontrolled network.}
		\label{fig:pend_diverge}
	\end{center}
\end{figure}
We consider a \emph{non-collocated} sensor–actuator configuration,
which is typically more difficult than the collocated case. The control input enters the first agent and the measurement is taken from the third one, i.e., $B = (1,0,0,0)$ and $C =
\begin{pmatrix}
0 &0&1&0
\end{pmatrix}$. Each SISO agent is minimal, and since $(A,B)$ and $(A,C)$ are respectively controllable and observable,~\cite[Proposition 3.1]{haraSICE} guarantees that the multi-agent system is controllable and observable (hence stabilizable and detectable). The controller and observer gains take the forms~\eqref{eq:calK}-\eqref{eq:calL}. Based on Proposition~\ref{propo_obs}, we pick $L$ to place the eigenvalues of $A-LC$ inside the stability region---see Fig.~\ref{fig:pend_observed}-Left. Then, using Proposition~\ref{propo_sepa}, we independently place the eigenvalues of $A-BK$ inside this region---see Fig.~\ref{fig:pend_controlled}-Left. Next, observer~\eqref{eq:observer} and controller~\eqref{eq:controller_hat} are constructed and combined.
\begin{figure}[H]
	\begin{center}	
\includegraphics[width=0.494\columnwidth,height=0.25\columnwidth]{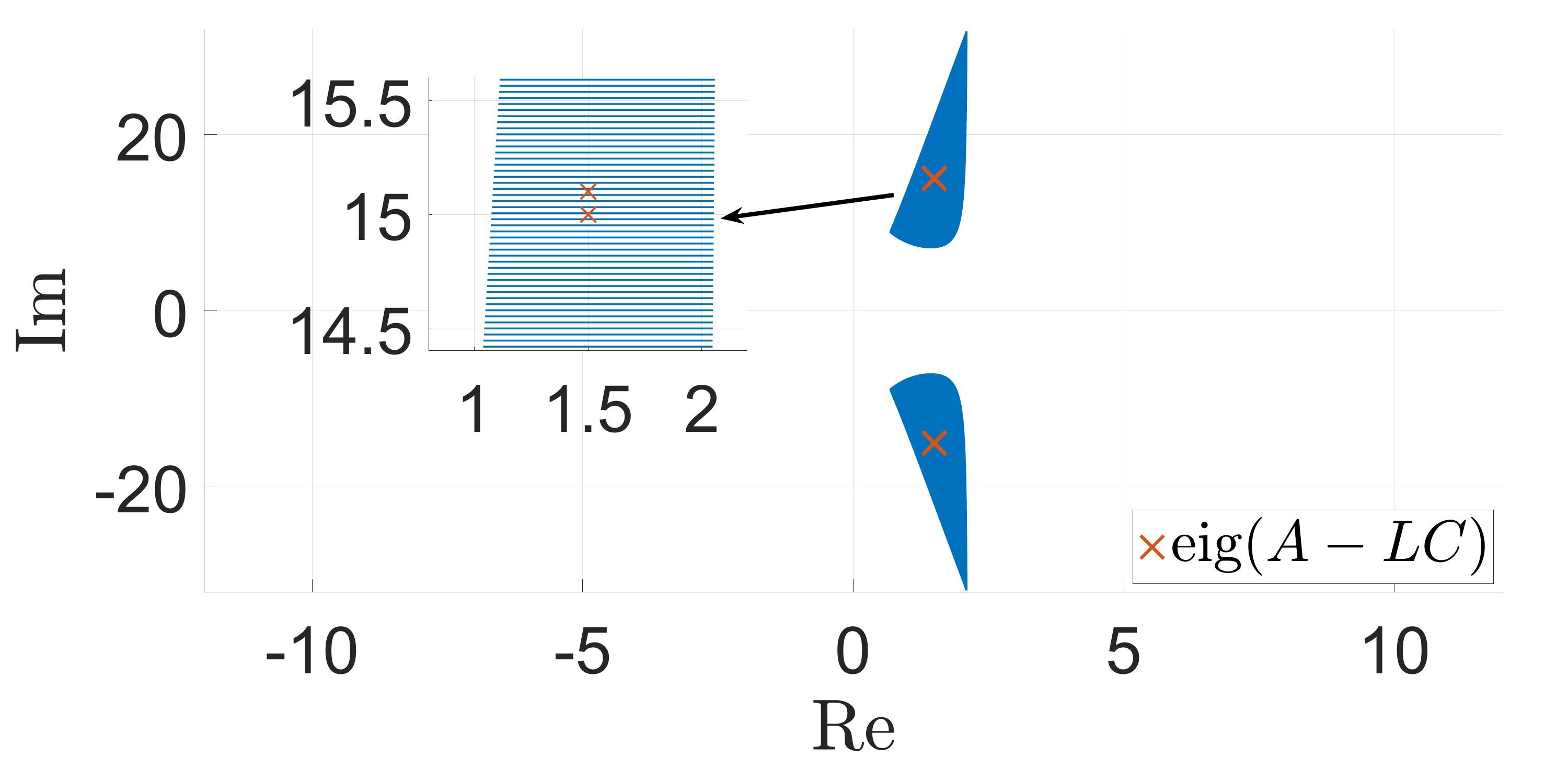} \includegraphics[width=0.494\columnwidth,height=0.25\columnwidth]{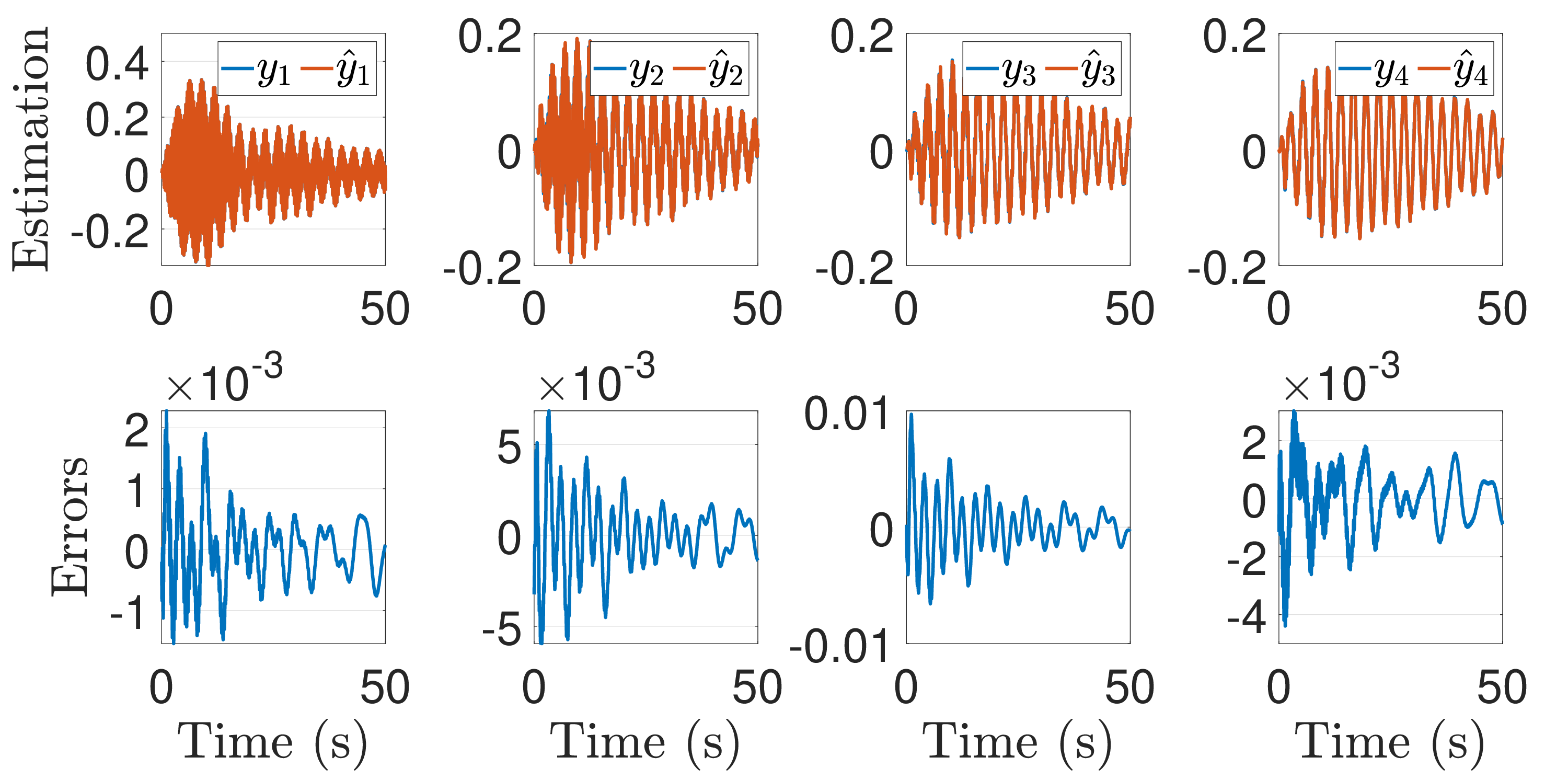} 
	\caption{Left: Stability region $\Lambda_s$ from $h(s)$ (blue) and $\sigma(A-LC)$. Right: Estimation results and errors.}
		\label{fig:pend_observed}
	\end{center}
\end{figure}
Simulation results in Fig.~\ref{fig:pend_observed}-Right and Fig.~\ref{fig:pend_controlled}-Right show that the estimation errors and trajectories are exponentially stable, though with a slow convergence rate and significant oscillations. This is because the estimation error and closed-loop dynamics include the eigenvalues with imaginary part significantly larger than the real part. Here, performance cannot be improved due to the emptiness of the \emph{$\D$-stability region} as shown in~\citep{haraFreq}.
\begin{figure}[H]
	\begin{center}	
\includegraphics[width=0.494\columnwidth,height=0.25\columnwidth]{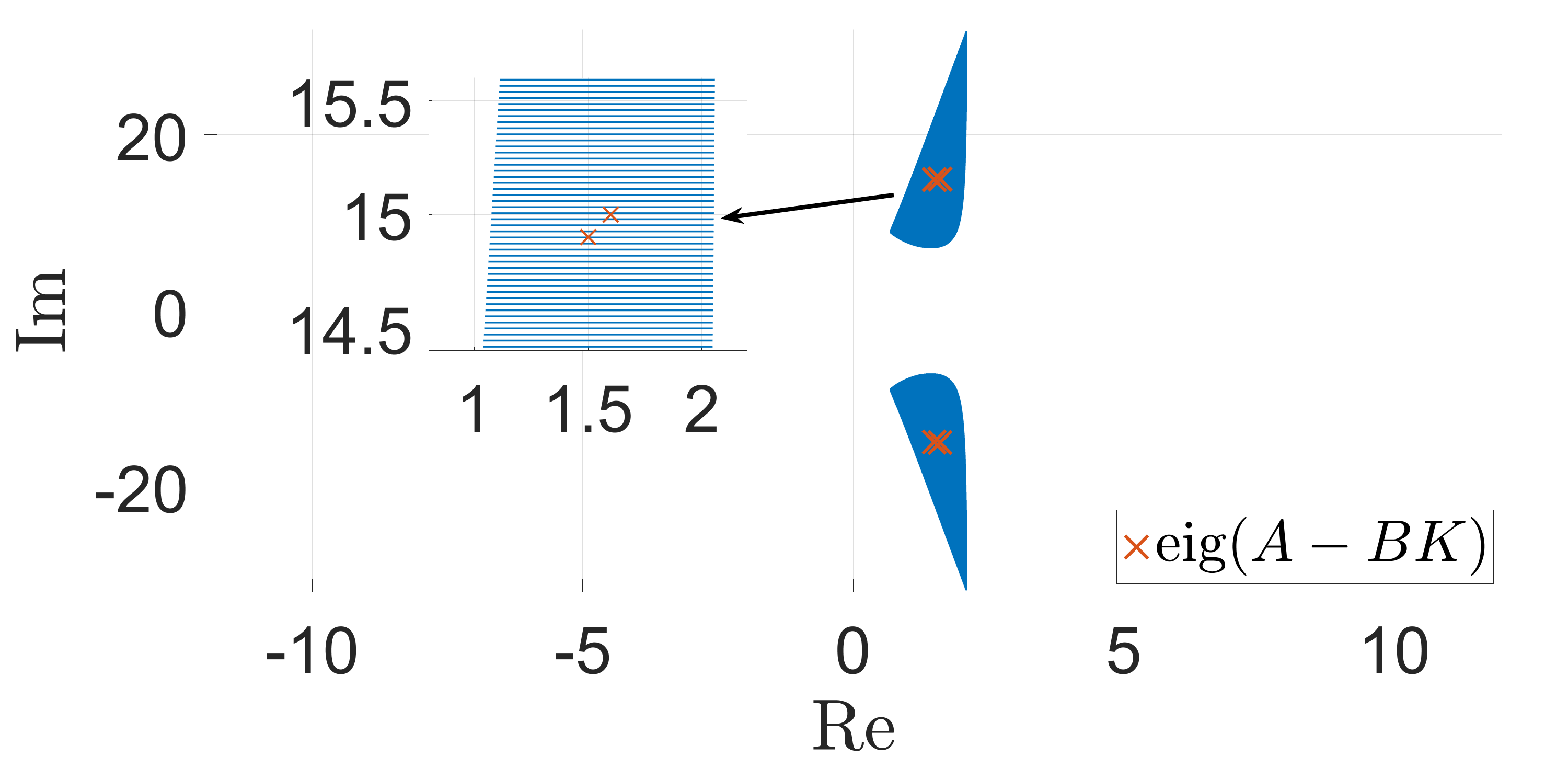} \includegraphics[width=0.494\columnwidth,height=0.25\columnwidth]{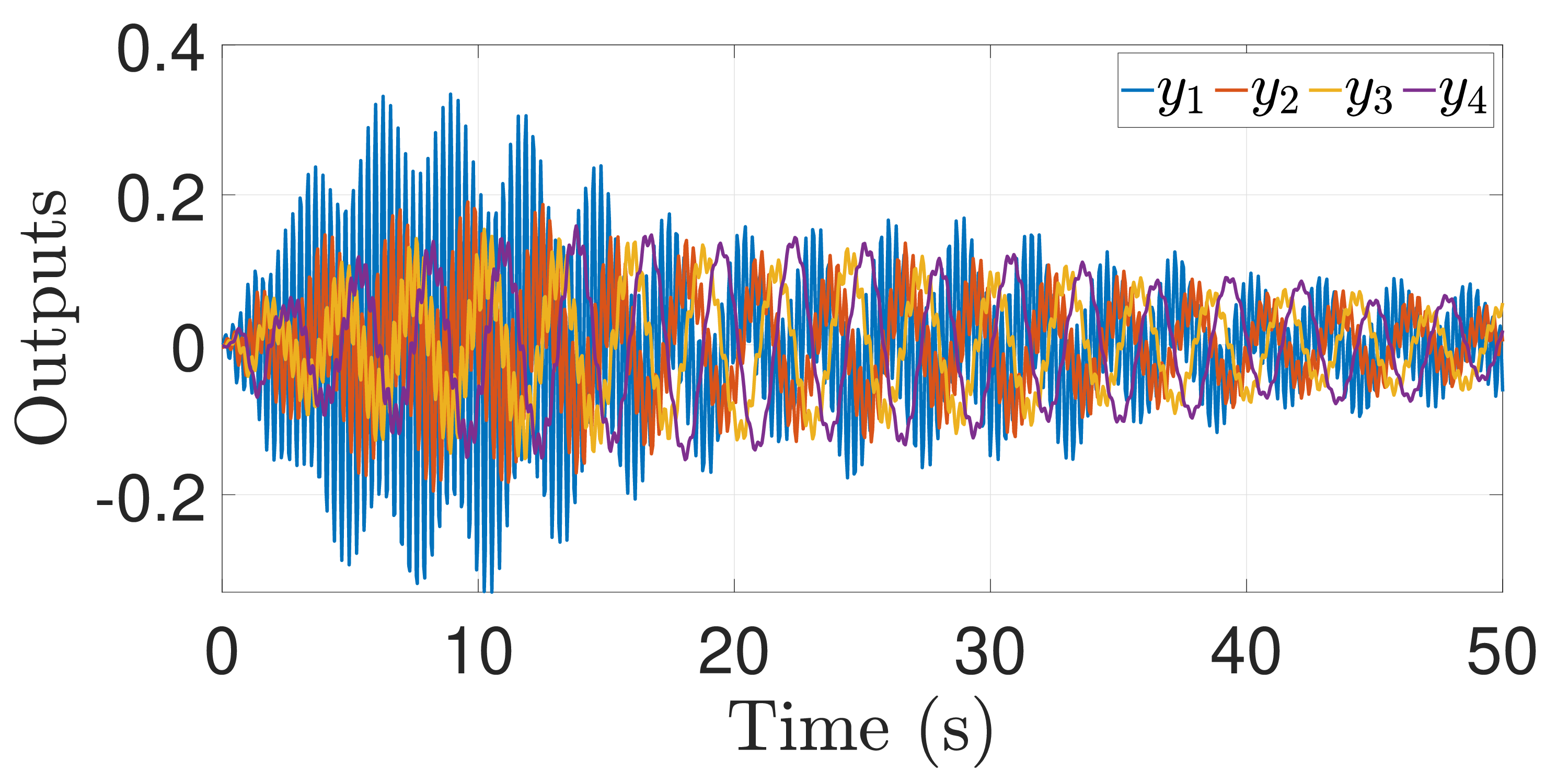} 
	\caption{Left: Stability region from $h(s)$ (blue) and $\sigma(A-BK)$. Right: Exponentially stabilized trajectories.}
		\label{fig:pend_controlled}
	\end{center}
\end{figure}

\section{On Controllability and Observability}
We discuss the controllability and observability of system~\eqref{eq:sysx}, which correspond to those of the pairs $(\A,\B)$ and $(\A,\C)$, respectively, and are sufficient conditions for controller and observer design.
Recall that in the MIMO agent case ($m > 1$),~\cite[Theorem 9]{trumpf} shows that, under the assumption
$\rank(B)<N$, the pair $(\A,\B)$ is controllable if and only if
$(A_h,B_h)$ is controllable, $(A_h,C_h)$ is observable, and the polynomial matrix
\begin{equation}\label{eq:Psi}
\Psi(s):=\begin{pmatrix}I_N \otimes D_h(s)-A\otimes N_h(s) & -B\otimes N_h(s)
\end{pmatrix}
\end{equation}
is left prime ($D_h(s),N_h(s)$ given below~\eqref{eq:Hs}). For the SISO agent case ($m = 1$),~\cite[Proposition 3.1]{haraSICE} shows that when
$\rank(B)<N$, $(\A,\B)$ is controllable if and only if
$(A_h,B_h)$ is controllable, $(A_h,C_h)$ is observable, and $(A,B)$ is controllable. Observability conditions can be stated as a duality. Now, we aim to derive alternative conditions that cover the MIMO agent case while separating the agent properties from those of their interaction structures, which are more intuitive from application viewpoints. We first show that the multi-agent system cannot be more controllable (or observable) than its structure.
\begin{lemma}\label{lem_AB}
If $(\A,\B)$ is controllable, then $(A,B)$ is controllable.
If $(\A,\C)$ is observable, then $(A,C)$ is observable.
\end{lemma}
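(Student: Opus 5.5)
I will argue by contraposition, using the Popov--Belevitch--Hautus (PBH) eigenvector test. The goal is to show that any left eigenvector of $A$ that is annihilated by $B$ lifts, through a Kronecker product, to a left eigenvector of $\A$ that is annihilated by $\B$.

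Suppose $(A,B)$ is not controllable. By the PBH test there exist $\lambda \in \CC$ and a nonzero $w \in \CC^{N}$ with $w^* A = \lambda w^*$ and $w^* B = 0$. The natural candidate for the lifted eigenvector is $w \otimes v$, where $v \in \CC^n$ is nonzero and still to be chosen. The mixed-product rule gives
\begin{equation*}
(w\otimes v)^* \A = w^* \otimes (v^* A_h) + \lambda\, w^* \otimes (v^* B_h C_h) = w^* \otimes \big(v^*(A_h + \lambda B_h C_h)\big),
\end{equation*}
\begin{equation*}
(w\otimes v)^* \B = (w^* B)\otimes (v^* B_h) = 0.
\end{equation*}
The second identity holds for every $v$, since $w^*B = 0$. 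For the first, choose $v$ to be a left eigenvector of the $n\times n$ complex matrix $A_h + \lambda B_h C_h$ with eigenvalue $\mu$; such a $v$ always exists over $\CC$. Then $(w\otimes v)^*\A = \mu (w\otimes v)^*$. Because $w \otimes v \neq 0$, the PBH test shows that $(\A,\B)$ is not controllable, which proves the first claim by contraposition.

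The observability claim follows the same pattern with right eigenvectors. Suppose $Aw = \lambda w$ and $Cw = 0$ with $w \neq 0$. Take $v$ to be a right eigenvector of $A_h + \lambda B_h C_h$. Then $\A(w\otimes v) = \mu (w\otimes v)$ and $\C(w\otimes v) = (Cw)\otimes(C_h v) = 0$, so $(\A,\C)$ is not observable. Alternatively, this part follows by duality, since $(\A^\top,\C^\top)$ has the same Kronecker structure with $A^\top$ in place of $A$ and the roles of $B_h$ and $C_h$ transposed.

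The only point needing care is that $\lambda$ and $w$ may be complex. I will therefore state the PBH test over $\CC$ and note that the vector $w\otimes v$ is nonzero. I will also avoid asserting that $\mu$ is real; the test does not require it.
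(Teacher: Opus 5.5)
Your proposal is correct and follows essentially the same route as the paper: contraposition via the PBH test, lifting the left eigenvector of $A$ annihilated by $B$ to $w\otimes v$ with $v$ a left eigenvector of $A_h+\lambda B_hC_h$. The only cosmetic differences are that you use conjugate transposes instead of plain transposes, and that you also spell out the observability case directly with right eigenvectors rather than citing duality alone; both are fine.
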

\begin{pf}
We prove only controllability; observability follows by duality.
Assume that $(A,B)$ is uncontrollable. Then, by the PBH test (see, e.g.,~\cite{hespanha2018linear}), there exist $\lambda\in\sigma(A)$ and a non-zero $v\in\CC^N$ such that $v^\top A=\lambda v^\top$ and $v^\top B=0$.
Let $\mu\in\sigma(A_h+\lambda B_hC_h)$ and pick a non-zero left eigenvector $\eta\in\CC^n$ such that $\eta^\top(A_h+\lambda B_hC_h)=\mu \eta^\top$. Define $z^\top:=v^\top\otimes \eta^\top\in\CC^{1\times Nn}$, which is non-zero. Using Kronecker identities, we get
\begin{align*}
z^\top \A
&=(v^\top\otimes \eta^\top)(I_N\otimes A_h + A\otimes(B_hC_h))\\
&=v^\top\otimes(\eta^\top A_h) + (v^\top A)\otimes(\eta^\top B_hC_h)\\
&=v^\top\otimes(\eta^\top A_h) + (\lambda v^\top)\otimes(\eta^\top B_hC_h)\\
&=v^\top\otimes (\eta^\top(A_h+\lambda B_hC_h))
=\mu (v^\top\otimes \eta^\top)=\mu z^\top,\\
z^\top \B&=(v^\top\otimes \eta^\top)(B\otimes B_h)\\&=(v^\top B)\otimes(\eta^\top B_h)=0\otimes(\eta^\top B_h)=0.
\end{align*}
Thus, there exists a left eigenvector $z^\top$ of $\A$ such that $z^\top\B=0$. So, $(\A,\B)$ is uncontrollable by the PBH test. \hfill $\blacksquare$
\end{pf}
We then find necessary conditions for network controllability and observability from~\citep{trumpf}.
\begin{lemma}\label{lem_con_obs}
Assume that $\rank(B_h)=\rank(C_h)=m$.\footnote{In the SISO agent case, this assumption is already included in the controllability of $(A_h,B_h)$ and the observability of $(A_h,C_h)$.}
\begin{enumerate}[label=(\roman*),leftmargin=20pt]
\item (Controllability) Assume that $\rank(B) < N$.
Then, $(\A,\B)$ is controllable only if
\begin{enumerate}[label=(\alph*),leftmargin=*]
\item $(A_h,B_h)$ is controllable; $(A_h,C_h)$ is observable;
\item For every $\lambda\in\sigma(A)$ such that there exists a non-zero vector
$v\in\CC^N$ satisfying 
$v^\top \begin{pmatrix}A-\lambda I_N & B\end{pmatrix} = 0$, 
we have $\det(D_h(s)-\lambda N_h(s))\neq 0$ for all $s\in\CC$.
\end{enumerate}
\item (Observability) Assume that $\rank(C) < N$.
Then, $(\A,\C)$ is observable only if
\begin{enumerate}[label=(\alph*)]
\item $(A_h,B_h)$ is controllable; $(A_h,C_h)$ is observable;
\item For every $\lambda\in\sigma(A)$ such that there exists a non-zero vector
$w\in\CC^N$ satisfying $\begin{pmatrix}
A - \lambda I_N\\C
\end{pmatrix}w = 0$,
we have $\det(D_h(s)-\lambda N_h(s))\neq 0$ for all $s\in\CC$.
\end{enumerate}
\end{enumerate}
\end{lemma}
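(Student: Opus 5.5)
We will prove only part (i); part (ii) follows by duality, replacing $(A,B,A_h,B_h,C_h)$ with $(A^\top,C^\top,A_h^\top,C_h^\top,B_h^\top)$. The observability matrix of $(\A,\C)$ is then the transposed controllability matrix of the dual network, and the coprime factorization is transposed accordingly.

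The plan is to invoke \cite[Theorem 9]{trumpf}. Under $\rank(B)<N$, it says that controllability of $(\A,\B)$ is equivalent to three conditions: $(A_h,B_h)$ is controllable, $(A_h,C_h)$ is observable, and $\Psi(s)$ in~\eqref{eq:Psi} is left prime. Condition (a) is then immediate from the ``only if'' direction. It remains to show that left primeness of $\Psi(s)$ forces (b). Left primeness means $\Psi(s)$ has full row rank $Nm$ for every $s\in\CC$. Equivalently, there is no pair of $s_0\in\CC$ and nonzero $\xi\in\CC^{1\times Nm}$ with $\xi\Psi(s_0)=0$.

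Argue by contraposition. Suppose $\lambda\in\sigma(A)$ and nonzero $v$ satisfy $v^\top(A-\lambda I_N)=0$ and $v^\top B=0$, and suppose also that $\det(D_h(s_0)-\lambda N_h(s_0))=0$ for some $s_0$. Pick a nonzero $\eta\in\CC^m$ with $\eta^\top(D_h(s_0)-\lambda N_h(s_0))=0$, and set $\xi:=v^\top\otimes\eta^\top\neq 0$. By the mixed-product rule, as in the proof of Lemma~\ref{lem_AB}:
\begin{align*}
\xi\,(I_N\otimes D_h(s_0)-A\otimes N_h(s_0)) &= v^\top\otimes(\eta^\top D_h(s_0)) - (\lambda v^\top)\otimes(\eta^\top N_h(s_0))\\
&= v^\top\otimes\big(\eta^\top(D_h(s_0)-\lambda N_h(s_0))\big)=0,\\
\xi\,(-B\otimes N_h(s_0)) &= -(v^\top B)\otimes(\eta^\top N_h(s_0))=0.
\end{align*}
So $\Psi(s_0)$ loses row rank, and $\Psi$ is not left prime. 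By \cite[Theorem 9]{trumpf}, $(\A,\B)$ is then uncontrollable, which contradicts the assumption and proves (b).

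The rank assumptions $\rank(B_h)=\rank(C_h)=m$ are there to meet the standing hypotheses of \cite{trumpf}, so that the theorem is applicable as stated. The main point to check is exactly this: that Theorem 9 there holds under our normalization, with $D_h^{-1}N_h$ a left coprime factorization of $C_h(sI-A_h)^{-1}B_h$, the same sign convention in $\Psi$, and $D_h$ square and nonsingular. If the convention differs, I would redo the Kronecker computation with the appropriate sign. The construction itself is routine, and the only place where the structure of the statement matters is the choice of $\xi$ as a Kronecker product.
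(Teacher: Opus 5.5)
Your proposal is correct and follows the paper's proof essentially step for step: you invoke \cite[Theorem 9]{trumpf} to get (a) and left primeness of $\Psi(s)$, then argue by contraposition using the Kronecker left null vector $v^\top\otimes\eta^\top$ of $\Psi(s_0)$, and obtain (ii) by duality. The one point you gloss over (as does the paper) is harmless: transposing the left coprime factorization gives a right coprime one for the dual agent, but for a minimal agent $\det(D_h(s)-\lambda N_h(s))$ is a nonzero constant multiple of $\det(sI_n-A_h-\lambda B_hC_h)$, which is invariant under transposition, so condition (b) is self-dual.
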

\begin{pf}
We prove only (i); (ii) follows by duality. When $\rank(B)\neq N$,~\cite[Theorem 9]{trumpf} shows that $(\A,\B)$ is controllable if and only if (i)(a) holds and the polynomial matrix $\Psi(s)$ in~\eqref{eq:Psi} is left prime for all complex $s$, which holds if and only if it has full row rank for all $s \in \CC$, so (i)(b) is equivalent to $\rank(\Psi(s))=Nm$ for all $s\in\CC$.
Thus, under (i)(a), the pair $(\A,\B)$ is controllable if and only if $\rank(\Psi(s))=Nm$ for all $s\in\CC$. Assume that there exist $\lambda\in\sigma(A)$ and a non-zero $v\in\CC^N$ such that
$
v^\top A=\lambda v^\top$, $v^\top B=0
$, and there exists $s_0\in\CC$ such that $\det(D_h(s_0)-\lambda N_h(s_0))=0$.
Then, there exists a non-zero $\eta\in\CC^m$ such that $
\eta^\top(D_h(s_0)-\lambda N_h(s_0))=0$. Define the non-zero row vector $z^\top:=(v^\top\otimes \eta^\top)\in\CC^{1\times Nm}$. Using Kronecker identities, we get
\begin{align*}
&z^\top(I_N\otimes D_h(s_0)-A\otimes N_h(s_0))\\
&=(v^\top\otimes\eta^\top)(I_N\otimes D_h(s_0)-A\otimes N_h(s_0))\\
&=v^\top\otimes(\eta^\top D_h(s_0))-(v^\top A)\otimes(\eta^\top N_h(s_0))\\
&=v^\top\otimes(\eta^\top D_h(s_0))-\lambda v^\top \otimes(\eta^\top N_h(s_0))\\
&=v^\top\otimes(\eta^\top(D_h(s_0)-\lambda N_h(s_0)))=v^\top\otimes0=0,
\end{align*}
and similarly
\begin{align*}
&z^\top(B\otimes N_h(s_0))
=(v^\top\otimes\eta^\top)(B\otimes N_h(s_0))\\&=(v^\top B)\otimes(\eta^\top N_h(s_0))=0\otimes(\eta^\top N_h(s_0))=0.
\end{align*}
Therefore, $z^\top\Psi(s_0)=0$, which implies $\rank(\Psi(s_0))<Nm$, hence $\Psi(s_0)$ is not left prime, and so $(\A,\B)$ is uncontrollable. This proves that (i)(b) is necessary.\hfill $\blacksquare$
\end{pf}
If we combine Lemmas~\ref{lem_AB} and~\ref{lem_con_obs}, controllability of $(\A,\B)$ implies that of $(A,B)$, and so (i)(b) follows. While the criterion of~\citep{trumpf} provides a complete necessary and sufficient condition for controllability of MIMO networks through the left-primeness of the polynomial matrix $\Psi(s)$, this condition is expressed in frequency-domain and polynomial-matrix terms that are difficult to interpret in classical state-space language. In contrast, for SISO agents,~\citep{haraSICE} shows that this condition reduces to the simple PBH test for the interconnection pair $(A,B)$ under agent minimality. The situation is fundamentally different in the MIMO case. In general, controllability of $(A,B)$ together with minimality of the agent is \emph{not} sufficient for controllability of the network, and the mechanism by which controllability is lost is not directly visible from the PBH test, as seen next. 

\begin{example}
Consider an example of two agents described by
$
A_h=\left(\begin{smallmatrix}
1&-2&1\\
0&1&-1\\
2&-2&0
\end{smallmatrix}\right)$, 
$B_h=\left(\begin{smallmatrix}
-1&2\\
2&2\\
0&2
\end{smallmatrix}\right)$,
$C_h=\left(\begin{smallmatrix}
-2&0&2\\
0&0&2
\end{smallmatrix}\right)$,
connected via the structure 
$
A=\left(\begin{smallmatrix}
2&2\\
-1&-1
\end{smallmatrix}\right)$, $B=\left(\begin{smallmatrix}
-1\\
2
\end{smallmatrix}\right)$. Here $n = 3$, $m = 2$, $N = 2$, $M = 1$.
One can verify that: $(A_h,B_h)$ is controllable and $(A_h,C_h)$ is observable with $\rank(B_h) = \rank(C_h) = 2 = m$, and $(A,B)$ is controllable with $\rank(B)= 1 < 2 = N$. However, $(\A,\B)$ given by~\eqref{eq:calABC} is uncontrollable---its controllability matrix has rank $5$ (instead of $Nn = 6$). This shows that unlike the SISO agent case, in this MIMO context, agent minimality and structure controllability are not sufficient for network controllability. In this example, uncontrollability of $(\A,\B)$ does not arise from a PBH obstruction of $(A,B)$, since $(A,B)$ is controllable. To illustrate this, we look for $z = (z_1,z_2) \in \RR^6$ where each $z_i \in \RR^3$ such that $z^\top\B
= (-z_1 + 2z_2)^\top B_h
=\begin{pmatrix}0&0\end{pmatrix}$. One choice is $z_1=(-8,1,7)$ and $z_2=(-7,-1,8)$. Hence, $z^\top\B=0$ although $z_1$ and $z_2$ are not proportional. Moreover, $z^\top \A = 3z^\top$, so $z$ certifies the uncontrollability of $(\A,\B)$. This cancellation occurs in $\RR^{1\times 2}$ and is genuinely
directional. It cannot be detected by PBH for $(A,B)$ and is
impossible in the SISO case ($m=1$) where $z_i^\top B_h$ is a scalar, so a purely MIMO mechanism.
\end{example}

\begin{ack}
The work of G. Q. B. Tran was supported in part by the AFOSR MURI FA9550-23-1-0337 grant. His visit to Keio University in Jan.-Mar. 2025 for this project was funded by the Honda Y-E-S Award Plus from Honda Foundation.
\end{ack}

\section*{DECLARATION OF GENERATIVE AI AND AI-ASSISTED TECHS. IN THE WRITING PROCESS}
During the preparation of this work, the authors used ChatGPT to improve language. After using this tool, the authors reviewed and edited the content as needed and take full responsibility for the content of the publication.

\bibliography{ref} 
\end{document}